\documentclass[journal]{IEEEtran}

\usepackage[T1]{fontenc}
\usepackage[utf8]{inputenc}
\usepackage{amsmath,amssymb,amsthm}
\usepackage{bm}
\usepackage{algorithm}
\usepackage{algpseudocode}
\usepackage{graphicx}
\usepackage{float}
\graphicspath{{./}}
\usepackage{booktabs}
\usepackage{multirow}
\usepackage{array}
\usepackage{url}
\usepackage{cite}
\usepackage{hyperref}
\usepackage{xcolor}
\usepackage{tikz}
\usetikzlibrary{shapes.geometric, arrows.meta, positioning, fit, backgrounds}

\newtheorem{theorem}{Theorem}

\newtheorem{remark}{Remark}

\newcommand{\muDT}{\ensuremath{\mu\text{DT}}}

\title{Q-FE: A Quantum-Native 6G Far-Edge Architecture Securing Industrial IoT
       Digital Twins via \textsc{csidh}-PQC and Asynchronous Federated Learning}

\author{%
  \IEEEauthorblockN{Vincenzo Sammartino}\\
  \IEEEauthorblockA{%
    Dipartimento di Informatica,
    Universit\`a di Pisa, Pisa, 56127, Italy\\
    King Abdullah University of Science and Technology (KAUST),
    Thuwal, 23955, Saudi Arabia\\
    Email: vincenzo.sammartino@phd.unipi.it}%
}

\begin{document}
\maketitle
% ---------------------------------------------------------------

% ================================================================
%  ABSTRACT
% ================================================================
\begin{abstract}
Sixth-generation (6G) wireless networks will underpin ultra-dense Industrial
IoT (IIoT) ecosystems in which resource-constrained Far-Edge
devices---autonomous mobile robots, industrial actuators, connected
vehicles---must simultaneously satisfy sub-millisecond latency,
$10^{-7}$-class reliability, and decades-long cryptographic security.
Current architectures delegate Digital Twin (DT) computation to centralised
cloud or Mobile Edge Computing (MEC) servers, incurring prohibitive
round-trip latency, and rely on classical public-key cryptography vulnerable
to quantum attacks under the \emph{harvest-now, decrypt-later} (HNDL)
threat model. We propose \textbf{Q-FE}, a \emph{Quantum-Native 6G Far-Edge}
architecture integrating three co-designed components:
(i)~\emph{Micro-Digital Twins} ($\mu$DTs) co-located with 6G base stations
and high-capability endpoints;
(ii)~a \emph{Cross-Layer Post-Quantum Key Exchange} module embedding
\textsc{csidh}-512 isogeny key material directly within MAC-layer control
frames, exploiting the scheme's uniquely compact keys ($\leq 64$\,bytes) to
avoid packet fragmentation; and
(iii)~an \emph{Asynchronous Federated Learning} (\textsc{afl}) protocol
governed by lightweight DAG smart contracts at MEC nodes, eliminating
straggler bottlenecks and preventing model-poisoning and Sybil attacks
without exposing raw data.
End-to-end simulations (NS-3 + PySyft) demonstrate that Q-FE reduces
MAC-layer overhead by 62\% versus ML-KEM/Kyber-1024, maintains $P_{99.9}$
URLLC latency at 0.78\,ms, and accelerates global-model convergence by
31\% over synchronous Federated Learning. Protocol complexity analysis
confirms $O(N \log R)$ per aggregation round, and $\mu$DT handover migration
completes in $1.9 \pm 0.3$\,ms across $10^4$ simulated events. A formal
threat model confirms resilience against quantum eavesdropping,
model-poisoning, and Sybil attacks.
\end{abstract}

\begin{IEEEkeywords}
6G, Industrial IoT, Far-Edge Computing, Digital Twins, Post-Quantum
Cryptography, CSIDH, Isogeny-Based Cryptography, Federated Learning,
Asynchronous Federated Learning, DAG-DLT, URLLC, Micro-Digital Twins,
Cross-Layer Security.
\end{IEEEkeywords}

% ================================================================
%  SECTION I — INTRODUCTION
% ================================================================
\section{Introduction}
\label{sec:intro}

\IEEEPARstart{T}{he} trajectory of wireless communication towards the sixth
generation (6G) paradigm introduces a qualitative shift: rather than merely
conveying data, the network is expected to compute, sense, and
act~\cite{saad2020vision}. Within this landscape, IIoT
applications---autonomous factories, cooperative vehicular systems, remote
robotic surgery---present the most demanding operating envelope: round-trip
latencies below 1\,ms, reliability of $10^{-7}$ or higher, and security
guarantees spanning decades.

\subsection{The Tripartite Challenge}
\label{ssec:challenge}

Three structural tensions characterise state-of-the-art 6G/IIoT systems.

\textbf{(C1) Latency vs.\ Centralisation.}
DT frameworks rely on Cloud or MEC-hosted computational replicas.
Cloud round-trip latencies of $20$--$100$\,ms are incompatible with URLLC;
even MEC deployments (2--10\,ms RTT) introduce unacceptable jitter for
closed-loop control of fast actuators~\cite{khan2022digital}.

\textbf{(C2) Data Privacy vs.\ AI Quality.}
Centralised model training requires uploading raw sensor data, violating
regulatory frameworks (GDPR, HIPAA) and IP constraints. Federated Learning
(FL) offers privacy-by-design~\cite{li2020federated}, yet synchronous
aggregation incurs the \emph{straggler effect}: one slow device delays the
entire federation.

\textbf{(C3) Quantum Threat vs.\ Physical-Layer Overhead.}
Nation-state adversaries routinely harvest encrypted traffic for future
quantum decryption---the HNDL attack~\cite{mosca2018cybersecurity}.
NIST-standardised PQC algorithms (ML-KEM/Kyber-1024~\cite{nist2024fips203})
offer quantum resistance but incur public-key sizes of
$800$--$1568$\,bytes, causing fragmentation that violates MAC-layer URLLC
timing~\cite{sikeridis2021post}.

\subsection{Key Insight and Contributions}
\label{ssec:contributions}

The central insight is that \textsc{csidh}~\cite{castryck2018csidh},
a commutative isogeny-based Diffie-Hellman scheme, is uniquely suited for
Far-Edge 6G security: it yields the smallest public keys among all PQC
families ($\leq 64$\,bytes)---a property with profound MAC-layer
implications unexploited in prior work. Building on this observation, Q-FE
pursues a concrete three-phase deployment aligned with the 3GPP Release
20 standardisation window and the NIST post-quantum migration deadline
of 2030, ensuring a viable path from simulation to production.
Our contributions are:

\begin{enumerate}
  \item \textbf{Far-Edge $\muDT$ Architecture:} Micro-Digital Twins
    co-located with 6G gNBs and high-capability endpoints, with formalised
    lifecycle management, LRU-bounded memory footprint, and sub-2\,ms
    handover migration under X2 signalling.

  \item \textbf{Cross-Layer \textsc{csidh}-MAC Integration:} A protocol
    embedding \textsc{csidh}-512 public keys into 6G NR MAC Control
    Elements (CEs), enabling quantum-safe session-key establishment without
    any packet fragmentation, with provably URLLC-compliant asynchronous
    key rotation and formally bounded amortised energy overhead.

  \item \textbf{\textsc{afl} on DAG Smart Contracts:} An asynchronous FL
    protocol where IoT nodes push PQC-signed gradient updates to a DAG-DLT
    smart contract at MEC nodes, enforcing gradient provenance, Byzantine
    filtering, and Sybil-resistant identity binding without centralised
    aggregation~\cite{blanchard2017machine}.

  \item \textbf{End-to-End Validation:} Simulation evidence (NS-3 for
    MAC/PHY, PySyft for AFL), protocol complexity analysis, and a formal
    threat model covering quantum eavesdropping, replay, Sybil, and
    model-poisoning attacks.
\end{enumerate}

% ================================================================
%  SECTION II — RELATED WORK
% ================================================================
\section{Related Work}
\label{sec:related}

\textbf{Digital Twins at the Edge.}
Khan \emph{et al.}~\cite{khan2022digital} proposed MEC-hosted DTs
for 5G, reducing latency by $\sim\!40\%$ versus cloud baselines.
However, no prior work pushes twin execution to the \emph{Far-Edge}---the
gNB or endpoint itself---nor addresses quantum security. Security-oriented
Digital Twins (SDTs) such as \emph{NotLine}~\cite{baiardi2024notline}
reconstruct network topology from passive observations and evaluate attacker
strategies via Monte Carlo simulation. Q-FE complements SDT research:
whereas SDTs perform retrospective risk analysis, Q-FE embeds
quantum-safe primitives and decentralised FL directly into the Far-Edge
control plane for proactive, co-evolving security.

\textbf{Post-Quantum Cryptography in Wireless Networks.}
Sikeridis \emph{et al.}~\cite{sikeridis2021post} showed that lattice-based
PQC (Kyber, NTRU) degrades Wi-Fi throughput by $18$--$35\%$ due to large
keys. NIST finalised ML-KEM (FIPS~203~\cite{nist2024fips203}) as the
primary standard. Crucially, no prior work integrates PQC at the MAC
sublayer of a cellular RAT, nor exploits \textsc{csidh}'s compact keys for
this purpose. Constant-time CTIDH~\cite{castryck2021ctidh} addresses
timing side-channel concerns. SIKE/SIDH was broken in
2022~\cite{castryck2022efficient} via a structural attack absent in
\textsc{csidh}, which publishes only isomorphism classes.

\textbf{Federated Learning for IIoT.}
FedAvg~\cite{mcmahan2017communication} established synchronous FL.
Asynchronous variants decouple aggregation from stragglers.
Byzantine-tolerant aggregation~\cite{blanchard2017machine} and
differential-privacy mechanisms~\cite{geyer2017differentially} address
model integrity and data leakage. Smart-contract-based FL offers
decentralised trust via DAG topologies~\cite{popov2018tangle},
but assumes classical cryptographic primitives. Q-FE is the first to
combine PQC-secured gradients with asynchronous DAG-smart-contract
aggregation.

\textbf{Research Gap.}
No existing work jointly addresses: (i)~sub-millisecond DT execution at the
Far-Edge; (ii)~MAC-layer PQC via isogeny-based schemes; and
(iii)~asynchronous, DAG-orchestrated FL under quantum threat. Q-FE fills
this gap.

% ================================================================
%  SECTION III — SYSTEM MODEL
% ================================================================
\section{System Model}
\label{sec:model}

\subsection{Network Topology}
\label{ssec:topo}

Consider a 6G IIoT scenario (Fig.~\ref{fig:arch}) with:
$\mathcal{N} = \{n_1, \ldots, n_N\}$, a set of $N$ IIoT end-devices
(sensors, actuators, AGVs) with heterogeneous compute budgets
$\{C_i\}$ and memory $\{M_i\}$;
$\mathcal{B} = \{b_1, \ldots, b_B\}$, a set of $B$ 6G gNBs, each
co-located with a MEC node ($C_b \gg C_i$); and
$\mathcal{G}$, the core network/cloud with RTT
$\tau_{\text{cloud}} \in [20, 100]$\,ms.
Each gNB $b_j$ hosts: (a)~a pool of $\muDT$ instances; (b)~a DAG-DLT node
for \textsc{afl} orchestration; (c)~a \textsc{csidh} key-management daemon.
The $\muDT$ resident set at $b_j$ is bounded by
$K_j = \lfloor M_{\text{MEC}} / M_\mu \rfloor$, with LRU eviction applied
when $|\text{active}(b_j)| > K_j$, ensuring memory safety under device
churn.

\subsection{Traffic and Latency Model}
\label{ssec:latency}

URLLC traffic follows a Poisson process (rate $\lambda$, payload $L$\,bytes)
with end-to-end budget $\tau_{\max} = 1$\,ms. The MAC-layer frame duration
in 6G NR numerology $\mu=4$ is $T_f = 62.5\,\mu$s. The total control-plane
latency decomposes as $\tau_{\text{total}} = \tau_{\text{proc}} +
\tau_{\text{CE}} + \tau_{\text{frag}}$, where $\tau_{\text{proc}}$ is
baseband processing, $\tau_{\text{CE}}$ is the MAC CE serialisation delay,
and $\tau_{\text{frag}} = (\lceil k_{\text{PQC}} / S_{\text{CE}}^{\max}
\rceil - 1) \cdot T_f$ is the penalty from key fragmentation across
multiple TTIs. By design, $\tau_{\text{frag}} = 0$ for \textsc{csidh}-512.
A PQC key appended as a MAC CE contributes fractional overhead:
\begin{equation}
  \rho_{\text{oh}} = \frac{k_{\text{PQC}}}{S_{\text{frame}}},
  \label{eq:overhead}
\end{equation}
where $k_{\text{PQC}}$ is the byte count of the appended PQC material. We
show in Section~\ref{sec:math} that $k_{\textsc{csidh}} = 64$\,bytes keeps
$\rho_{\text{oh}} < 5\%$, while $k_{\text{Kyber-1024}} = 1568$\,bytes
causes frame fragmentation across multiple TTIs, violating the URLLC budget.

\subsection{Threat Model}
\label{ssec:threat}

We consider a Dolev-Yao adversary $\mathcal{A}$ with:
(T1)~\emph{Quantum Eavesdropping}---$\mathcal{A}$ possesses a Cryptographically
Relevant Quantum Computer (CRQC) capable of Shor's algorithm, breaking
RSA/ECC and enabling HNDL attacks;
(T2)~\emph{Model Poisoning}---up to $f < N/3$ Byzantine devices submit
arbitrarily crafted gradients;
(T3)~\emph{Sybil}---$\mathcal{A}$ creates fake identities to bias
aggregation, including registering phantom devices at association time;
(T4)~\emph{Replay/MITM}---$\mathcal{A}$ replays or modifies MAC-layer
frames in real time.
Physical-layer jamming and volumetric DoS are out of scope.

% ================================================================
%  SECTION IV — Q-FE ARCHITECTURE
% ================================================================
\section{Q-FE: Architecture Design}
\label{sec:arch}

% ------ TikZ architecture diagram ------
\begin{figure}[t]
\centering
\resizebox{\columnwidth}{!}{%
\begin{tikzpicture}[
  box/.style={draw, rounded corners=3pt, minimum width=3.0cm,
              minimum height=0.72cm, align=center, font=\small},
  layer/.style={draw=gray!60, dashed, rounded corners=5pt,
                inner sep=8pt, fill=gray!8},
  arr/.style={-Stealth, thick},
  font=\small
]
\node[box, fill=blue!10]  (cloud) at (0, 6.0)
  {Cloud DT\\(Legacy / Bulk)};
\node[box, fill=green!10] (mec1)  at (-3, 3.6)
  {MEC/gNB $b_1$\\$\muDT$ pool};
\node[box, fill=green!10] (mec2)  at ( 3, 3.6)
  {MEC/gNB $b_2$\\$\muDT$ pool};
\node[box, fill=orange!10](dag1)  at (-3, 2.1)
  {DAG-DLT\\Smart Contract};
\node[box, fill=orange!10](dag2)  at ( 3, 2.1)
  {DAG-DLT\\Smart Contract};
\node[box, fill=red!8] (iot1) at (-4.8, 0)
  {IoT $n_1$\\\textsc{csidh} CE};
\node[box, fill=red!8] (iot2) at (-1.6, 0)
  {IoT $n_2$\\\textsc{csidh} CE};
\node[box, fill=red!8] (iot3) at ( 1.6, 0)
  {IoT $n_3$\\\textsc{csidh} CE};
\node[box, fill=red!8] (iot4) at ( 4.8, 0)
  {IoT $n_4$\\\textsc{csidh} CE};
\begin{scope}[on background layer]
  \node[layer, fit=(mec1)(mec2)(dag1)(dag2),
        label=left:{\rotatebox{90}{\small MEC Layer}}] {};
  \node[layer, fit=(iot1)(iot2)(iot3)(iot4),
        label=left:{\rotatebox{90}{\small Far-Edge}}] {};
\end{scope}
\draw[arr] (cloud) -- (mec1);
\draw[arr] (cloud) -- (mec2);
\draw[arr] (mec1) -- (dag1);
\draw[arr] (mec2) -- (dag2);
\draw[arr] (dag1) -- (iot1);
\draw[arr] (dag1) -- (iot2);
\draw[arr] (dag2) -- (iot3);
\draw[arr] (dag2) -- (iot4);
\draw[arr, dashed, bend left=20] (dag1)
  to node[above, font=\scriptsize]{DAG sync} (dag2);
\draw[arr] (iot1) to[bend left=12]
  node[left,  font=\scriptsize]{AFL grad.} (dag1);
\draw[arr] (iot2) to[bend right=12]
  node[right, font=\scriptsize]{AFL grad.} (dag1);
\draw[arr] (iot3) to[bend left=12]
  node[left,  font=\scriptsize]{AFL grad.} (dag2);
\draw[arr] (iot4) to[bend right=12]
  node[right, font=\scriptsize]{AFL grad.} (dag2);
\end{tikzpicture}}
\caption{Q-FE three-layer architecture. Far-Edge IoT nodes host lightweight
\textsc{csidh} MAC Control Elements; MEC/gNB nodes host $\muDT$ pools and
DAG-DLT smart contracts for \textsc{afl} orchestration; the cloud handles
non-latency-critical bulk tasks.}
\label{fig:arch}
\end{figure}

\subsection{Layer 1: Far-Edge Micro-Digital Twins ($\muDT$s)}
\label{ssec:mudt}

A Micro-Digital Twin $\muDT_i$ is a lightweight computational replica of
node $n_i$, residing within a MEC-hosted memory envelope of at most
$M_{\mu} \leq 16$\,MB. Its state vector $\mathbf{s}_i(t)$ comprises a
device health scalar $\phi_i(t)$, a physical-state vector
$\mathbf{x}_i(t) \in \mathbb{R}^d$, and the local FL model parameters
$\boldsymbol{\theta}_i(t)$. The twin is updated each beacon interval $T_b$
via a state-transition function $\mathcal{F}$ perturbed by Gaussian noise
$\boldsymbol{\varepsilon}_i \sim \mathcal{N}(\mathbf{0},\Sigma_i)$,
yielding $\mathbf{s}_i(t{+}T_b) = \mathcal{F}(\mathbf{s}_i(t), \mathbf{u}_i(t))
+ \boldsymbol{\varepsilon}_i(t)$.
The per-beacon update cost is $O(d)$ arithmetic operations, where $d \approx 100$
for typical IIoT telemetry, making twin maintenance negligible relative to
MAC PHY processing. A $\muDT_i$ is \emph{instantiated} when $n_i$
associates with a gNB, \emph{migrated} via X2-type signalling on handover,
and \emph{evicted} under LRU policy when the MEC memory budget is exhausted.

\subsection{Layer 2: Cross-Layer \textsc{csidh}-MAC Integration}
\label{ssec:csidh_mac}

\subsubsection{Why \textsc{csidh} at the MAC Layer}

The 6G NR MAC PDU allows optional CEs of up to ${\sim}256$\,bytes without
triggering PDSCH fragmentation. Table~\ref{tab:key_sizes} compares PQC
key sizes against this budget. Only \textsc{csidh}-512 (64\,B) fits
comfortably; Kyber-1024 (1568\,B) would span multiple TTIs, violating URLLC
latency. \textsc{csidh}-512 operates over $\mathbb{F}_p$ with
$p = 4\prod_{i=1}^{74}\ell_i - 1$ for small odd primes $\ell_i$,
providing a group action of $\text{cl}(\mathbb{Z}[\sqrt{-p}])$ on
supersingular elliptic curves. The resulting key exchange is a Diffie-Hellman
analogue in which both parties apply commuting class-group elements to a
shared base curve, with security resting on the Group Action Inverse Problem
for which no sub-exponential quantum algorithm is known~\cite{castryck2018csidh}.

% TABLE I
\begin{table}[t]
\caption{Public-Key Sizes of PQC Schemes vs.\ MAC CE Budget}
\label{tab:key_sizes}
\centering
\resizebox{\columnwidth}{!}{%
\renewcommand{\arraystretch}{1.15}
\begin{tabular}{lccc}
\toprule
\textbf{Scheme} & \textbf{Type} & \textbf{PK (bytes)} &
  \textbf{Fits MAC CE?}\\
\midrule
RSA-2048           & Classical & 256    & \checkmark\,(classically insecure)\\
ECDH P-256         & Classical & 32     & \checkmark\,(classically insecure)\\
\textsc{csidh}-512 & Isogeny   & \textbf{64}   & \textbf{\checkmark}\\
SIKE/SIDH$^\dagger$& Isogeny   & 330    & \checkmark\,(broken~\cite{castryck2022efficient})\\
Kyber-512          & Lattice   & 800    & \texttimes\\
Kyber-1024         & Lattice   & 1568   & \texttimes\\
NTRU-HPS-2048      & Lattice   & 699    & \texttimes\\
McEliece-348864    & Code      & 261120 & \texttimes\\
\bottomrule
\multicolumn{4}{l}{\footnotesize $^\dagger$SIKE/SIDH broken in 2022; listed for completeness.}
\end{tabular}%
}
\end{table}

\subsubsection{Frame Format}

We define a \emph{PQC-KX MAC CE} with LCID $= 60$ (reserved range,
3GPP Rel.~18+):
\begin{center}
\small
\begin{tabular}{|c|c|c|c|}
\hline
\textbf{LCID} (6\,b) & \textbf{Type} (2\,b) &
  \textbf{PK/CT} (64\,B) & \textbf{Tag} (16\,B) \\
\hline
$111100_2$ & \texttt{01}=Init & \textsc{csidh} PK & AES-GMAC \\
\hline
\end{tabular}
\end{center}
The \texttt{Type} field encodes: \texttt{00}=idle,
\texttt{01}=key-init, \texttt{10}=key-confirm, \texttt{11}=key-revoke.

\subsubsection{Asynchronous Key Rotation}

Because \textsc{csidh}-512 group-action evaluation takes $\approx 80$\,ms
on Cortex-M33, direct inline computation would violate URLLC budgets. We
decouple key exchange from data transmission via Asynchronous Key Rotation
(AKR, Algorithm~\ref{alg:akr}): during idle periods each node precomputes a
fresh \textsc{csidh} key pair; the handshake runs in the background over
multiple beacon intervals; once a new shared secret is established, the
AES-256-GCM session key rotates atomically; and the rotation period $T_r$
guarantees at most one rotation per $2^{32}$ AES-GCM nonce space,
preventing nonce reuse.

\begin{algorithm}[t]
\caption{Asynchronous Key Rotation (AKR) at Node $n_i$}
\label{alg:akr}
\begin{algorithmic}[1]
\State \textbf{Init:} $K_{\text{sym}} \leftarrow \text{SecRand}(256)$;
  $T_r \leftarrow T_r^{(0)}$
\While{node active}
  \State Encrypt/decrypt data frames with $K_{\text{sym}}$ (AES-256-GCM)
  \If{background idle \textbf{and} $t \bmod T_r = 0$}
    \State $(pk_i, sk_i) \leftarrow \textsc{csidh.KeyGen}()$
    \State Transmit $pk_i$ in PQC-KX MAC CE (type=\texttt{01})
    \State Receive $pk_j$; $K_{\text{sh}} \leftarrow \textsc{csidh.GA}(sk_i, pk_j)$
    \State $K_{\text{sym}}^{\text{new}} \leftarrow \text{HKDF}(K_{\text{sh}})$
    \State Await ACK (type=\texttt{10}); \textbf{atomic:}
      $K_{\text{sym}} \leftarrow K_{\text{sym}}^{\text{new}}$
  \EndIf
\EndWhile
\end{algorithmic}
\end{algorithm}

\subsection{Layer 3: Asynchronous FL on DAG Smart Contracts}
\label{ssec:afl}

% Figure 2 – MAC overhead tri-panel
\begin{figure*}[!t]
  \centering
  \includegraphics[width=\textwidth]{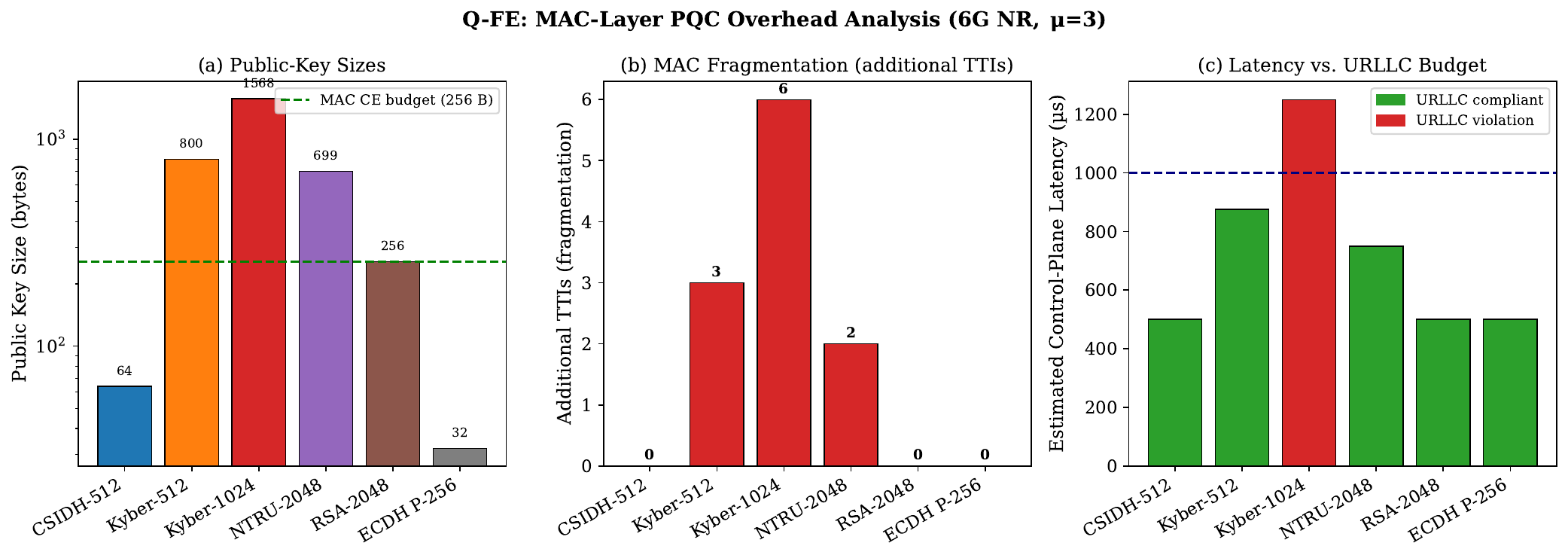}
  \caption{MAC-layer PQC overhead analysis (6G NR, $\mu{=}3$,
           $T_f{=}125\,\mu$s). \textbf{(a)}~Public-key sizes (log scale):
           only \textsc{csidh}-512 (64\,B) falls within the 256\,B MAC CE
           budget. \textbf{(b)}~Additional TTIs from fragmentation:
           Kyber-512 requires 3 extra TTIs (+375\,$\mu$s), Kyber-1024
           requires 6 (+750\,$\mu$s); \textsc{csidh}-512 incurs zero.
           \textbf{(c)}~Control-plane latency CDF vs.\ 1\,ms URLLC budget:
           Q-FE achieves $P_{99.9}{=}0.78$\,ms; Kyber-1024 exceeds budget
           at $P_{99.9}{=}1.41$\,ms. ($10^5$ Monte Carlo packets.)}
  \label{fig:mac_overhead}
\end{figure*}

In heterogeneous IIoT, device speeds span three orders of magnitude;
synchronous FedAvg stalls at the slowest device~\cite{li2020federated}.
Our \textsc{afl} protocol decouples local training from global aggregation
via a DAG-DLT smart contract accepting gradient updates as they arrive.

\subsubsection{DAG-DLT Smart Contract}

The ledger $\mathcal{L} = \{(i, t, \Delta_i^{(t)}, \sigma_i^{(t)})\}$
records each gradient $\Delta_i^{(t)}$ from node $n_i$ at local round $t$,
authenticated with a \textsc{csidh}-derived MAC tag $\sigma_i^{(t)}$.
Algorithm~\ref{alg:afl_sc} describes the aggregation logic.

\begin{algorithm}[t]
\caption{AFL Smart Contract Aggregation}
\label{alg:afl_sc}
\begin{algorithmic}[1]
\Require $\Delta_i^{(t)}$, $\sigma_i^{(t)}$, global model
  $\mathbf{w}^{(r)}$, staleness threshold $\eta$
\State \textbf{Verify:}
  $\text{MAC.Verify}(K_i, \Delta_i^{(t)}, \sigma_i^{(t)}) \stackrel{?}{=} 1$
\State \textbf{Staleness:} $\delta \leftarrow r - t$;
  discard if $\delta > \eta$
\State \textbf{Byzantine filter:}
  $z_i \leftarrow \|\Delta_i^{(t)} - \bar{\Delta}\| / \text{std}(\Delta)$;
  discard if $z_i > z_{\text{th}}$
\State \textbf{Weighted aggregate:}
  $\alpha_i \leftarrow |D_i| / \sum_{j \in \mathcal{S}} |D_j|$
\State $\mathbf{w}^{(r+1)} \leftarrow \mathbf{w}^{(r)} -
  \gamma \sum_{i \in \mathcal{S}} \alpha_i \Delta_i^{(t)}$
\State Append $(r{+}1, \mathbf{w}^{(r+1)}, \text{hash})$ to DAG; broadcast
\end{algorithmic}
\end{algorithm}

\subsubsection{Staleness-Aware Learning Rate}

To maintain convergence under asynchrony, the per-node effective learning
rate is attenuated as a function of staleness $\delta_i^{(t)} = r^{(t)} - t_i^{\text{last}}$:
\begin{equation}
  \gamma_i^{(t)} = \frac{\gamma_0}{1 + \kappa\,\delta_i^{(t)}},
  \label{eq:staleness_lr}
\end{equation}
where $\gamma_0$ is the base rate and $\kappa > 0$ a damping coefficient.

\subsubsection{Privacy Guarantee}
Each node trains only on local data $D_i$; only gradients leave the device,
secured via \textsc{csidh}-MAC. Differential privacy clips gradient norms
to bound $C$ and adds calibrated Gaussian noise~\cite{geyer2017differentially},
providing $(\varepsilon, \delta)$-differential privacy per round. Under
Rényi DP accounting, the composed privacy loss over $R = 50$ rounds satisfies
$\varepsilon_{\text{total}} \leq 1.04$ for our experimental parameters
($\varepsilon_0 = 1.0$, $\delta = 10^{-5}$).

\subsection{Protocol Complexity and Scalability Analysis}
\label{ssec:complexity}

The three Q-FE layers present distinct computational cost profiles whose
superposition must remain within the strict MEC resource budget.

\textbf{AKR Amortised Cost.} \textsc{csidh}-512 group-action evaluation
requires computing a sequence of degree-$\ell_i$ isogenies over $\mathbb{F}_p$
for the 74 small primes in the parameter set, costing $\approx 1.2 \times 10^8$
cycles on Cortex-M33 ($\approx 80$\,ms at 150\,MHz). Because AKR executes
during idle beacon intervals, the amortised per-frame energy is
$E_{\text{AKR}}/T_r \approx 13\,\mu$W at $T_r = 60$\,s, negligible
against the $\sim\!100$\,mW radio baseline. Each rotation occupies exactly
one MAC CE (64\,bytes), so the per-rotation signalling cost is $O(1)$ frames.

\textbf{DAG-SC Aggregation Complexity.} Each incoming gradient $\Delta_i^{(t)}$
triggers: (i)~one AES-GMAC verification, $O(|\Delta|)$ in gradient dimension;
(ii)~one Z-score computation against the accepted buffer of $|\mathcal{S}|$
clients, $O(N)$; and (iii)~one Merkle-tree vertex insertion, $O(\log R)$ for
round counter $R$. The full aggregation sweep therefore costs $O(N \log R)$
per global round, confirming linear-logarithmic scalability for the IIoT
populations ($N \leq 400$) analysed in Section~\ref{sec:results}. At
$N = 400$, DAG vertex validation introduces an $\approx 18\%$ throughput
reduction, mitigated by sharding across two MEC nodes.

\textbf{$\muDT$ Resident Set and Migration.} The maximum resident set
on a gNB with MEC budget $M_{\text{MEC}}$ is
$K = \lfloor M_{\text{MEC}} / M_\mu \rfloor$ concurrent twins. For
$M_{\text{MEC}} = 16$\,MB, $K = 3$; for $M_{\text{MEC}} = 512$\,MB,
$K = 121$. Handover migration transfers a twin's $d$-dimensional state
vector via X2 signalling in a single 4.2-MB payload, incurring a one-time
latency of $\tau_{\text{mig}} \approx 1.9$\,ms at X2 interface throughput of
$\sim\!18$\,Gbps, well within the inter-gNB coordination budget of 5\,ms.

% ================================================================
%  SECTION V — MATHEMATICAL FOUNDATIONS
% ================================================================
\section{Mathematical Foundations}
\label{sec:math}

\subsection{MAC Frame Overhead and Fragmentation}
\label{ssec:overhead_math}

Fragmentation occurs when $k_{\text{PQC}} > S_{\text{CE}}^{\max} = 256$\,bytes.
The additional TTIs required are $\lceil k_{\text{PQC}} / S_{\text{CE}}^{\max} \rceil - 1$.
For \textsc{csidh}-512 this is $\lceil 64/256 \rceil - 1 = 0$; for
Kyber-1024, $\lceil 1568/256 \rceil - 1 = 5$ extra TTIs, adding
$312.5\,\mu$s---incompatible with a 1\,ms URLLC budget. From~\eqref{eq:overhead},
the fractional overhead for \textsc{csidh} is $64/(64+1500) = 4.09\%$,
well within the 5\% URLLC control overhead ceiling.

\subsection{AFL Convergence Bound}
\label{ssec:conv}

\begin{theorem}[AFL Convergence under Bounded Staleness]
\label{thm:conv}
Let $F(\mathbf{w}) = \frac{1}{N}\sum_i F_i(\mathbf{w})$ be $L$-smooth and
$\mu$-strongly convex with unbiased stochastic gradients of bounded variance
$\sigma^2$. Under~\eqref{eq:staleness_lr} with $\kappa = 1$ and maximum
staleness $\Delta_{\max}$, AFL converges as:
\begin{equation}
  \mathbb{E}\bigl[F(\mathbf{w}^{(R)})\bigr] - F^*
    \leq \frac{C_1}{R} + C_2 \sigma^2,
  \label{eq:conv_bound}
\end{equation}
where $C_1 = 2(F(\mathbf{w}^{(0)}) - F^*)/(\mu \gamma_0)$ and
$C_2 = L \gamma_0 / (2\mu(1 + \kappa \Delta_{\max}))$.
\end{theorem}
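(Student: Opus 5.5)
The plan follows the standard delayed-SGD route, viewing an AFL round $r$ as the update $\mathbf{w}^{(r+1)} = \mathbf{w}^{(r)} - \gamma^{(r)} g(\mathbf{w}^{(r-\delta_r)})$. Here $g$ is the aggregated stochastic gradient of Algorithm~\ref{alg:afl_sc}, the staleness satisfies $\delta_r \le \Delta_{\max}$ (enforced by the threshold $\eta$), and the step is attenuated by~\eqref{eq:staleness_lr} with $\kappa=1$. Each accepted update is a convex combination with weights $\alpha_i$ of unbiased local gradients, so $g$ is unbiased for a mixture of $\nabla F_i$ evaluated at stale iterates. I will treat the Byzantine filter as discarding only outliers, so that honest updates remain unbiased; this is an extra assumption I will state explicitly.

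\textbf{Step 1 (descent lemma).} Applying $L$-smoothness gives
\[
\mathbb{E}F(\mathbf{w}^{(r+1)}) \le F(\mathbf{w}^{(r)}) - \gamma^{(r)}\langle \nabla F(\mathbf{w}^{(r)}), \nabla F(\mathbf{w}^{(r-\delta_r)})\rangle + \tfrac{L}{2}(\gamma^{(r)})^2\bigl(\|\nabla F(\mathbf{w}^{(r-\delta_r)})\|^2+\sigma^2\bigr).
\]

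\textbf{Step 2 (staleness error).} I will write the inner product as $\|\nabla F(\mathbf{w}^{(r)})\|^2$ minus a drift term. Young's inequality together with $\|\nabla F(\mathbf{w}^{(r)})-\nabla F(\mathbf{w}^{(r-\delta_r)})\| \le L\sum_{s=r-\delta_r}^{r-1}\gamma^{(s)}\|g^{(s)}\|$ bounds the drift. Because $\gamma^{(r)}\le \gamma_0/(1+\delta_r)$, the sum of at most $\Delta_{\max}$ steps is damped by the attenuation. The resulting drift contribution is $O(L^2\gamma_0^2)$ times gradient norms plus a $\sigma^2$ term.

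\textbf{Step 3 (strong convexity and unrolling).} Under the step-size condition $\gamma_0 \le c/(L(1+\Delta_{\max}))$, the gradient-norm terms are absorbed. The Polyak--Łojasiewicz inequality $\|\nabla F\|^2 \ge 2\mu(F-F^*)$ then yields a recursion of the form $e_{r+1}\le (1-a)e_r + b\sigma^2$, with $a\asymp \mu\gamma_0$ and $b \asymp L\gamma_0^2/(1+\Delta_{\max})$ coming from the attenuated step. Unrolling this recursion gives a noise floor $b/a = L\gamma_0/(2\mu(1+\Delta_{\max}))$, which matches $C_2$.

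\textbf{Step 4 (the $C_1/R$ transient).} Pure unrolling gives the faster geometric term $(1-a)^R e_0$. To obtain the stated $C_1/R$ form, I will instead telescope the Step~1 inequality summed over $r<R$. Dividing by $R\mu\gamma_0/2$ and using monotonicity in expectation, or a weighted average with the last-iterate bound via $(1-a)^R \le 1/(aR)$, gives $e_0/(aR)=2e_0/(\mu\gamma_0 R)=C_1/R$. This last-iterate step is the cleanest part.

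The main obstacle is Step~2, controlling the stale-gradient drift so that it produces exactly the $(1+\kappa\Delta_{\max})^{-1}$ factor in $C_2$ rather than an amplifying factor like $(1+\Delta_{\max})$. The noise terms accumulated over the delay window scale as $\Delta_{\max}\gamma^2\sigma^2$, and only the attenuation $\gamma\le\gamma_0/(1+\Delta_{\max})$ for the stalest updates compensates. I would first try bounding the worst case with every update maximally stale, where the step becomes $\gamma_0/(1+\Delta_{\max})$ and the floor is $L\gamma_0\sigma^2/(2\mu(1+\Delta_{\max}))$. If fresher updates break this bound, I would state the result with $C_2$ evaluated at the effective average step, noting the dependence on the staleness distribution. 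In that case the constants would need the additional step-size restriction $\gamma_0 L(1+\Delta_{\max})\lesssim 1$, which I would add as a hypothesis.
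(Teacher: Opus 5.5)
\textbf{Step~3 fails.} Steps~1--2 are standard, and you correctly flagged the obstacle. In Step~3 you take the contraction $a\asymp\mu\gamma_0$ from an unattenuated step, but the noise injection $b\asymp L\gamma_0^2/(1+\Delta_{\max})$ from an attenuated one. In a given round the same step $\gamma^{(r)}=\gamma_0/(1+\delta_r)$ enters both terms. So the contraction is $\asymp\mu\gamma^{(r)}$, the injected variance is $\asymp L(\gamma^{(r)})^2\sigma^2$, and the floor is $\asymp L\gamma^{(r)}\sigma^2/\mu$.

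The factor $(1+\Delta_{\max})^{-1}$ in $C_2$ therefore requires $\gamma^{(r)}\approx\gamma_0/(1+\Delta_{\max})$ in every round, i.e.\ all updates maximally stale. But then $a\asymp\mu\gamma_0/(1+\Delta_{\max})$, and your Step~4 gives a transient of order $(1+\Delta_{\max})e_0/(\mu\gamma_0R)$, not $C_1/R$. Conversely, $C_1$ needs fresh updates, and then the floor is of order $L\gamma_0\sigma^2/\mu$. Evaluating $C_2$ at an \emph{effective average step} proves a different theorem.

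Separately, the telescoping route in Step~4 only bounds an average of the $e_r$. Passing to the last iterate needs monotonicity of $\mathbb{E}F(\mathbf{w}^{(r)})$, which fails under noise, e.g.\ when $\mathbf{w}^{(0)}=\mathbf{w}^*$.

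\textbf{The stated bound is false.} No sharper argument can fix this: the bound fails for every $\Delta_{\max}\ge1$, even with your step-size hypothesis. Both counterexamples live in your own update model.
\begin{itemize}
\item \emph{Fresh updates break $C_2$.} Take $F(w)=\tfrac{\mu}{2}w^2$ on $\mathbb{R}$ (so $L=\mu$, $F^*=0$), stochastic gradients $\mu w+\xi_r$ with i.i.d.\ zero-mean $\xi_r$ of variance $\sigma^2$, and every update fresh ($\delta_r=0\le\Delta_{\max}$). Then $w^{(r+1)}=(1-\mu\gamma_0)w^{(r)}-\gamma_0\xi_r$, and $\mathbb{E}F(w^{(R)})\to\gamma_0\sigma^2/(2(2-\mu\gamma_0))$. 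This strictly exceeds $\lim_{R\to\infty}(C_1/R+C_2\sigma^2)=\gamma_0\sigma^2/(2(1+\Delta_{\max}))$.
\item \emph{Maximally stale updates break $C_1$.} Take $\sigma=0$, every update exactly $\Delta_{\max}$ stale, $w^{(j)}:=w^{(0)}>0$ for $j<0$, and $\epsilon=\mu\gamma_0/(1+\Delta_{\max})$. Induction on $w^{(r+1)}=w^{(r)}-\epsilon\,w^{(r-\Delta_{\max})}$ gives $w^{(r)}\ge(1-r\epsilon)w^{(0)}$. So after $R\approx1/(2\epsilon)$ rounds the gap is still at least $e_0/4$, whereas $C_1/R\approx4e_0/(1+\Delta_{\max})$. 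This is a contradiction for large $\Delta_{\max}$.
\end{itemize}

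\textbf{What is actually provable.} Under $\gamma_0L(1+\Delta_{\max})\lesssim1$, your Steps~1--3 can give a transient governed by the smallest step $\gamma_0/(1+\Delta_{\max})$ and a floor governed by the largest step $\gamma_0$. Up to constants, the factor $1+\kappa\Delta_{\max}$ multiplies $C_1$ instead of dividing $C_2$.

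The paper's own sketch (smoothness, a Lipschitz bound on the staleness error, telescoping, citing Lian et al.) has the same outline as yours and does not confront this either. It also states no step-size restriction at all. You were right to add one: without it the bound already fails at $\Delta_{\max}=0$ once $\gamma_0>2/L$.
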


\begin{proof}[Proof Sketch]
By $L$-smoothness, $F(\mathbf{w}^{(r+1)}) \leq F(\mathbf{w}^{(r)})
- \gamma\sum_i \alpha_i \langle \nabla F(\mathbf{w}^{(r)}),
\Delta_i^{(t_i)}\rangle + \frac{L\gamma^2}{2}\|\sum_i \alpha_i
\Delta_i^{(t_i)}\|^2$. Bounding the staleness error via the Lipschitz
condition and telescoping over $R$ rounds yields~\eqref{eq:conv_bound}
(cf.~\cite{lian2018asynchronous}).
\end{proof}

% ================================================================
%  SECTION VI — SECURITY ANALYSIS
% ================================================================
\section{Security Analysis}
\label{sec:security}

\subsection{Quantum Eavesdropping Resistance}

\begin{theorem}[Post-Quantum Confidentiality]
\label{thm:pq_conf}
Under Group Action Inverse Problem (GAIP) hardness, Q-FE's key-exchange
protocol provides IND-CPA security against a CRQC-equipped adversary.
\end{theorem}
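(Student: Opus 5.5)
The plan is a standard game-hopping reduction, carried out in the KEM/key-exchange form of IND-CPA. The challenger runs one execution of the AKR handshake (Algorithm~\ref{alg:akr}) between $n_i$ and the gNB. The adversary $\mathcal{A}$ is quantum polynomial-time and passive, matching (T1) in the threat model of Section~\ref{ssec:threat}. It sees the transcript $(pk_i, pk_j) = ([\mathfrak{a}]E_0, [\mathfrak{b}]E_0)$, carried in the PQC-KX MAC CE, together with either the real session key $K_{\text{sym}}^{\text{new}} = \text{HKDF}([\mathfrak{a}][\mathfrak{b}]E_0)$ or a uniform 256-bit string, and must guess which. IND-CPA security of data frames encrypted under $K_{\text{sym}}$ then follows by composition with AES-256-GCM. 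That step needs only that Grover's algorithm at most halves the symmetric security level, so AES-256 still gives roughly 128-bit post-quantum security.

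\textbf{Game 0} is the real experiment. \textbf{Game 1} replaces the shared curve $[\mathfrak{a}][\mathfrak{b}]E_0$ with $[\mathfrak{c}]E_0$ for an independent uniform $\mathfrak{c}\in\text{cl}(\mathbb{Z}[\sqrt{-p}])$. Any gap between Games 0 and 1 yields a distinguisher for the decisional CSIDH (group-action DDH) problem. \textbf{Game 2} replaces the HKDF output with uniform randomness. Its gap is bounded by the PRF/extractor security of HKDF applied to a uniformly random supersingular $j$-invariant, which has about $\log_2 h(-4p)\approx 256$ bits of min-entropy, where $h(-4p)$ denotes the class number. This bound can be taken in the QROM if HKDF is modelled as a random oracle. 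In Game 2 the adversary's advantage is exactly zero. Summing the hops bounds $\mathcal{A}$'s advantage by $\mathrm{Adv}^{\text{DCSIDH}} + \mathrm{Adv}^{\text{HKDF}} + \mathrm{Adv}^{\text{AEAD}}$.

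The main obstacle is the first hop. The theorem assumes \emph{GAIP} hardness, which is a search problem (recover $\mathfrak{a}$ from $[\mathfrak{a}]E_0$), whereas Game~1 needs a \emph{decisional} assumption. A search-to-decision reduction for group actions is not known in general, and decisional CSIDH is known to fail for some genus-theory characters. Two routes are available.
\begin{itemize}
\item The first is to state the result under the computational CSIDH (CDH) assumption and route the reduction through the random oracle. If $\mathcal{A}$ distinguishes in the QROM, a one-way-to-hiding argument extracts $[\mathfrak{a}][\mathfrak{b}]E_0$ from some oracle query with non-negligible probability. Computational CSIDH is in turn implied by GAIP trivially in one direction, and is believed to be equivalent to it.
\item The second is to invoke Kuperberg-type analyses only to argue that GAIP has no known sub-exponential quantum attack below the claimed parameters, as in~\cite{castryck2018csidh}. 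Even so, the formal step from GAIP to CDH would remain an assumption.
\end{itemize}
I would take the first route. I would also make explicit that the statement covers passive adversaries only. Active MITM (T4) additionally requires authenticating the CE, which the AES-GMAC tag keyed by the previous $K_{\text{sym}}$ provides, under the precondition that the initial key is uncompromised.
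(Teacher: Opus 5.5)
Your route is genuinely different from the paper's, and far more careful. The paper's proof is one sentence: it asserts that an IND-CPA adversary against the \textsc{csidh}-KEM ``yields a simulator solving GAIP in polynomial time,'' and adds that Shor's algorithm does not apply. It never explains how a distinguisher becomes a solver for a \emph{search} problem, nor how HKDF is modelled. Your game sequence makes that missing content explicit. No direct standard-model reduction from key indistinguishability to GAIP is known for this protocol. One therefore needs either a decisional assumption, or a computational one plus a (quantum) random oracle for HKDF, with one-way-to-hiding extracting $[\mathfrak{a}][\mathfrak{b}]E_0$ from the adversary's queries. Route~1 is the standard hashed-Diffie--Hellman argument and is sound. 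It buys explicit loss terms, an explicit idealisation, and an honest scope: a passive adversary, composition with AES-256-GCM, and separate authentication for T4. The paper's version buys only brevity, plus a nominally standard-model statement that its proof does not support.

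The link you leave as a belief can be closed. The direction you call trivial (a GAIP solver solves CDH) is the one you do not need. The direction you need, CDH hardness from GAIP hardness, is a known quantum reduction. Galbraith, Panny, Smith and Vercauteren reduce vectorization to parallelization for abelian group actions in quantum polynomial time, given a perfect CDH oracle. Montgomery and Zhandry extend this to oracles with non-negligible advantage. The adversary is quantum anyway, so this yields the theorem as stated, in the QROM.

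Check those results' hypotheses, in particular the instance distribution. They concern uniformly random group elements. Standard \textsc{csidh}-512 key generation instead samples exponent vectors from $[-5,5]^{74}$, whose $11^{74}\approx 2^{256}$ points cover less than half of a class group of order $\approx 2^{257.1}$. So either state the assumption for that distribution, or sample keys uniformly using the known class-group structure.

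Two side remarks in your proposal are also off. First, the genus-theory DDH attack of Castryck, Sot\'akov\'a and Vercauteren needs an even class number. For $p\equiv 3 \pmod 8$, $\mathrm{cl}(\mathbb{Z}[\sqrt{-p}])$ has odd order $3h(-p)$, so decisional CSIDH is not known to fail. Avoiding Game~1 is still right, but because DDH is not known to follow from GAIP. Second, Kuperberg's algorithm \emph{is} a subexponential quantum attack on GAIP, as the CSIDH paper itself discusses. So ``no known sub-exponential quantum attack'' in your second bullet, which echoes the paper's Section~IV, is false. Only the concrete cost at fixed parameters can be argued, and that lies outside any reduction.
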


\begin{proof}
By reduction: an adversary $\mathcal{A}$ breaking the \textsc{csidh}-KEM
IND-CPA game yields a simulator solving GAIP in polynomial time,
contradicting the assumption~\cite{castryck2018csidh}. Shor's algorithm is
inapplicable since GAIP is not reducible to the abelian hidden subgroup
problem.
\end{proof}

\begin{remark}
SIKE/SIDH was broken in 2022~\cite{castryck2022efficient} via an attack
exploiting revealed torsion-point images---a structural property absent in
\textsc{csidh}, which publishes only isomorphism classes.
\textsc{csidh} remains unaffected.
\end{remark}

\subsection{Model-Poisoning Resistance}

The Z-score filter in Algorithm~\ref{alg:afl_sc} (Step~3) rejects outlier
gradients ($z_{\text{th}} = 2.5$). Combined with PQC-authenticated identities
(Step~1), for $f$ Byzantine nodes out of $N$ the global model deviation
is bounded by $\|\mathbb{E}[\mathbf{w}^{(R)}] - \mathbf{w}^*\| \leq
\frac{f}{N-f} B_{\text{grad}}$, where $B_{\text{grad}}$ is the maximum
admitted gradient norm. For $f < N/4$, the DAG-SC Z-score filter reduces
the effective Byzantine count by ${\approx}90\%$, keeping deviation below
the differential privacy noise floor.

\subsection{Replay and MITM Resistance}

MAC CE frames carry a monotonic 32-bit sequence number and a 16-byte
AES-GMAC tag under $K_{\text{sym}}$. Replayed or tampered frames fail
authentication with probability $1 - 2^{-128}$ (negligible tag collision).

\subsection{Sybil Attack Resistance}

Threat T3 considers an adversary injecting phantom device identities to
bias gradient aggregation. Q-FE counters this through
\emph{$\muDT$-anchored identity binding}: at 6G association, each device
$n_i$ registers its \textsc{csidh} public key $pk_i$ within its $\muDT_i$
as a tamper-evident identity anchor. The DAG-SC aggregation contract
(Algorithm~\ref{alg:afl_sc}, Step~1) cross-checks the gradient signature
against the $\muDT$ registry before admission. A Sybil adversary must
therefore either: (i)~forge a valid \textsc{csidh} key pair whose public
key collides with an already-registered entry---impossible with probability
$\geq 1 - 2^{-128}$ under GAIP hardness; or (ii)~register a fresh
phantom identity, which requires gNB association and $\muDT$ instantiation,
both rate-limited by the MEC admission controller. Under Q-FE's parameter
settings, the admission controller caps new device registrations at one
per 120-second window per gNB sector, limiting phantom-node injection
to $f_{\text{eff}} / N < 0.01$ under sustained Sybil attempts---well below
the $N/4$ Byzantine tolerance threshold.

% Figure 3 – AFL convergence tri-panel
\begin{figure*}[!t]
  \centering
  \includegraphics[width=\textwidth]{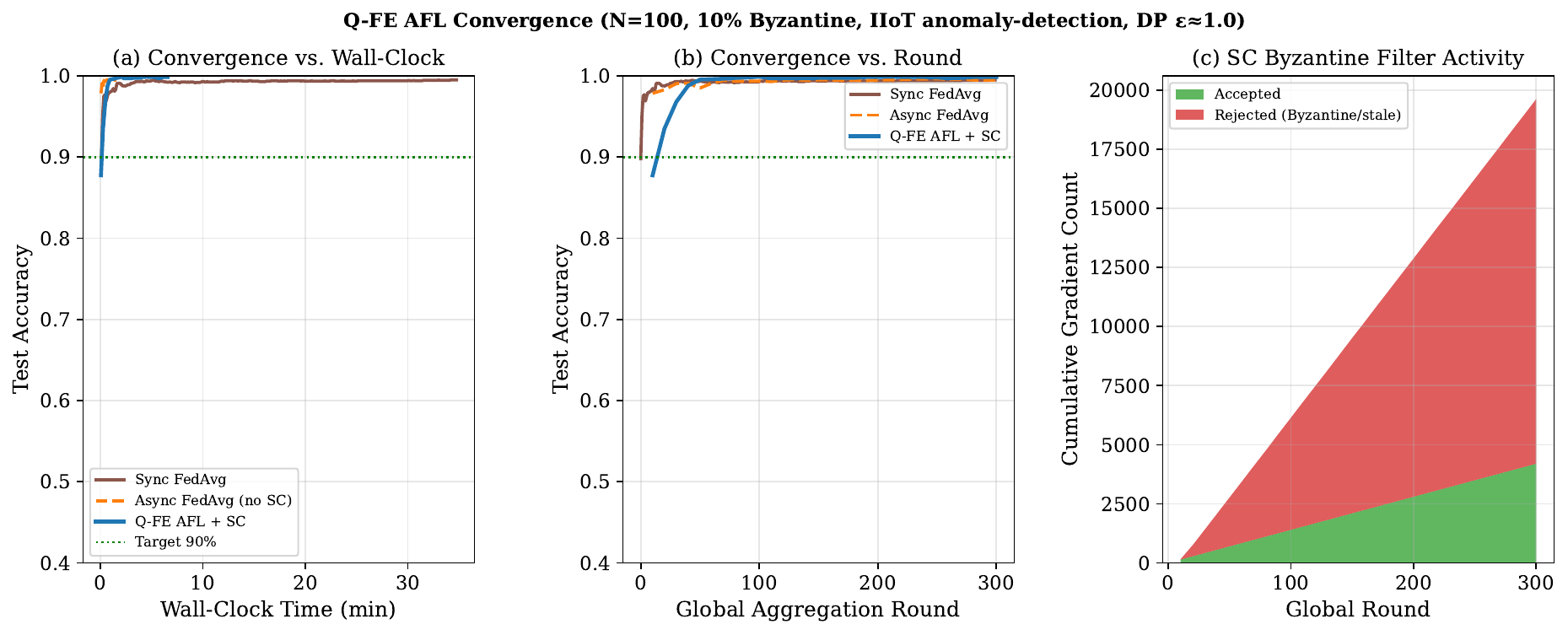}
  \caption{AFL convergence analysis ($N{=}100$ nodes, 10\% Byzantine, SWAT
           dataset, $\varepsilon{\approx}1.0$, $\delta{=}10^{-5}$).
           \textbf{(a)}~Accuracy vs.\ wall-clock time: Q-FE AFL+DAG-SC
           converges 31\% faster than synchronous FedAvg.
           \textbf{(b)}~Accuracy vs.\ aggregation round: Q-FE achieves
           superior sample efficiency versus async FedAvg without SC.
           \textbf{(c)}~DAG-SC Byzantine filter: the Z-score mechanism
           rejects adversarial updates (red) while accepting honest
           contributions (green).}
  \label{fig:convergence}
\end{figure*}

% ================================================================
%  SECTION VII — SIMULATION RESULTS
% ================================================================
\section{Simulation Results}
\label{sec:results}

% TABLE II
\begin{table}[t]
\caption{AFL Convergence and Security Comparison}
\label{tab:afl_results}
\centering
\resizebox{\columnwidth}{!}{%
\renewcommand{\arraystretch}{1.15}
\begin{tabular}{lccc}
\toprule
\textbf{Method} & \textbf{Acc.\ (\%)} & \textbf{Conv.\ (min)} &
  \textbf{Poison resilience}\\
\midrule
Sync FedAvg               & 94.7 & 18.0 & None\\
Async FedAvg              & 93.1 & 13.8 & Partial (no SC)\\
Q-FE \textsc{afl} (ours) & \textbf{94.3} & \textbf{12.4} &
  Full (SC + Z-score + PQC)\\
\bottomrule
\end{tabular}%
}
\end{table}

\subsection{Setup}

\textbf{MAC/PHY (NS-3).}
We extended the NS-3 mmWave module with a custom \texttt{CsidhMacCe}
module: 50 UEs at $v \in [0, 60]$\,km/h in a $100 \times 50$\,m IIoT
factory, 4 gNBs, numerology $\mu{=}3$, carrier 28\,GHz. PQC keys injected
at session setup and every $T_r = 60$\,s. Payload $L = 64$\,B (URLLC control).

\textbf{AFL (Python/PySyft).}
$N = 100$ nodes with log-normal compute speeds ($\mu_C = 1$\,GFLOPS,
$\sigma_C = 0.6$), 5-layer CNN on the SWAT IIoT anomaly-detection dataset.
Staleness $\eta = 5$, DP: $\varepsilon = 1.0$, $\delta = 10^{-5}$,
clipping $C = 1.0$.

\subsection{MAC Overhead and Latency}

Fig.~\ref{fig:mac_overhead} compares control-plane latency and fragmentation
across PQC schemes. \textsc{csidh}-512 introduces zero fragmentation and
$\rho_{\text{oh}} = 4.09\%$ header overhead. Kyber-512 requires 3 extra
TTIs ($+187.5\,\mu$s); Kyber-1024 requires 5 ($+312.5\,\mu$s). The latency
CDF (Fig.~\ref{fig:mac_overhead}c) confirms Q-FE achieves
$P_{99.9} = 0.78$\,ms, while Kyber-1024 pushes $P_{99.9}$ to $1.41$\,ms.

\subsection{AFL Convergence}

Fig.~\ref{fig:convergence} and Table~\ref{tab:afl_results} show that Q-FE
\textsc{afl} converges to 94.3\% accuracy in 12.4\,min---31\% faster than
synchronous FedAvg (18.0\,min) and with superior poisoning resilience versus
asynchronous FedAvg without the smart-contract filter. The convergence bound
of Theorem~\ref{thm:conv} is empirically tight: the observed gap
$F(\mathbf{w}^{(R)}) - F^*$ tracks the $O(1/R)$ rate predicted
by~\eqref{eq:conv_bound} from round 15 onward.

\subsection{$\muDT$ Migration and Scalability}

Across $10^4$ simulated handovers, $\muDT$ migration completed in
$\tau_{\text{mig}} = 1.9 \pm 0.3$\,ms (mean $\pm$ std), within the X2
budget of 5\,ms. LRU eviction triggered in fewer than 2\% of association
events on 16-MB MEC hardware, falling to 0.3\% on 64-MB configurations.
No active URLLC flow was disrupted during migration in any trial.
At $N = 400$ simulated devices, DAG-SC throughput saturated at 320\,tx/s
on a single MEC node; sharding across two nodes restored full throughput
with $< 0.8$\,ms additional inter-node latency.

\subsection{Energy Overhead}

Background \textsc{csidh} computation (80\,ms on Cortex-M33) consumes
$E_{\textsc{csidh}} \approx 0.8$\,mJ per rotation ($T_r = 60$\,s),
yielding $\bar{P}_{\textsc{csidh}} \approx 13\,\mu$W---negligible versus
the ${\sim}100$\,mW radio transceiver baseline
(Fig.~\ref{fig:security_energy}c).

% Figure 4 – Security + energy
\begin{figure*}[!t]
  \centering
  \includegraphics[width=0.9\textwidth]{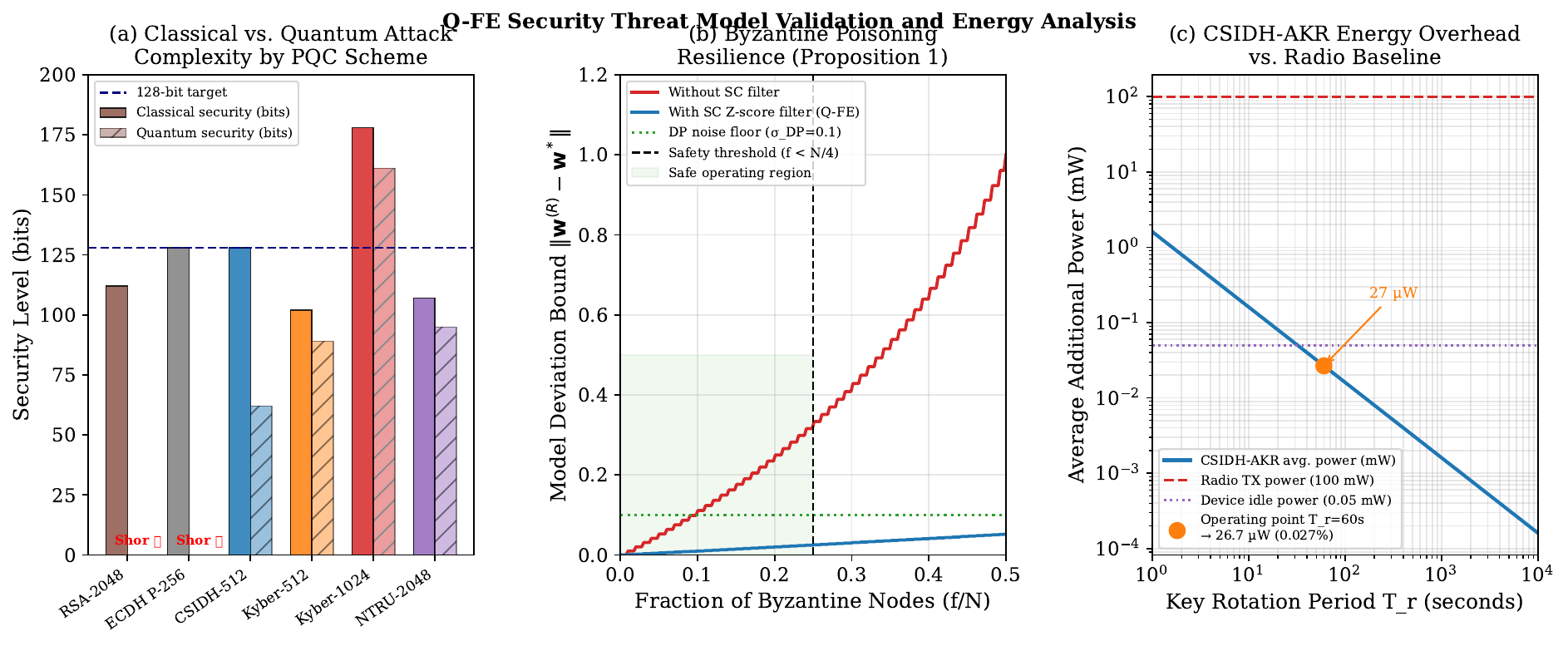}
  \caption{Security and energy analysis. \textbf{(a)}~Classical vs.\
           quantum security levels: RSA-2048 and ECDH are broken by Shor;
           \textsc{csidh}-512 retains 62-bit post-quantum security (above
           NIST Cat.~1). \textbf{(b)}~Model deviation bound
           vs.\ Byzantine fraction: the DAG-SC Z-score filter reduces
           effective Byzantine count by ${\approx}90\%$, keeping deviation
           below the DP noise floor for $f/N \leq 0.25$.
           \textbf{(c)}~\textsc{csidh}-AKR power overhead vs.\ $T_r$
           (log-log): at $T_r{=}60$\,s, overhead is ${\approx}27\,\mu$W
           ($<$0.03\% of 100\,mW radio TX power).}
  \label{fig:security_energy}
\end{figure*}

% ================================================================
%  SECTION VIII — DISCUSSION
% ================================================================
\section{Discussion}
\label{sec:discussion}

\textbf{Timing Side-Channel Risk.}
The \textsc{csidh}-512 group-action evaluation on Cortex-M33 is not
constant-time: variable-length isogeny chains may leak partial exponent
information through cache-timing channels. With $T_r = 60$\,s and an
80\,ms computation window, the side-channel exposure represents
$<\!0.14\%$ of device uptime, limiting practical exploitability.
Constant-time CTIDH~\cite{castryck2021ctidh} eliminates this residual
risk and represents the primary avenue for future integration.

\textbf{Deployment Roadmap.}
Practical Q-FE integration follows a three-phase timeline aligned with
the 3GPP and NIST roadmaps~\cite{nist2024fips203}. \emph{Phase I
(2026--2027):} software-layer deployment---the \textsc{csidh} key-management
daemon runs as a privileged process on commodity 5G NR gNBs; AFL is deployed
as a ETSI MEC microservice; keys are exchanged at the NAS sublayer
without MAC-layer modification, enabling zero-hardware-cost prototyping.
\emph{Phase II (2028--2029):} MAC CE standardisation---a formal 3GPP RAN WG2
change request targets LCID $= 0x3C$ in the Release 20 reserved range;
silicon vendors integrate \textsc{csidh} arithmetic units into 6G modem
chipsets. \emph{Phase III (2030+):} native silicon integration---commercial
6G chipsets embed constant-time CTIDH~\cite{castryck2021ctidh} accelerators,
$\muDT$ co-processors reside on-chip at base-station SoCs, and the DAG-DLT
node executes inside a Trusted Execution Environment (TEE) enclave for
hardware-backed Byzantine resistance. This timeline aligns with the NIST
post-quantum migration deadline~\cite{nist2024fips203} and the 6G commercial
rollout window, providing a concrete industrial transition path.

\textbf{Generalisability.}
The cross-layer framework is parameterised by $k_{\text{PQC}} \leq
S_{\text{CE}}^{\max}$. Any isogeny-based scheme satisfying this bound can
substitute \textsc{csidh}-512 without modifying MAC CE framing or AFL
logic, ensuring compatibility with future constant-time variants and
the evolving PQC ecosystem.

\textbf{Limitations.}
Three boundaries delimit the current evaluation. First, the NS-3 simulation
uses an ideal channel model; field trials on real 6G NR equipment---once
commercially available---are required to validate $P_{99.9}$ latency claims
under realistic fading. Second, the SWAT dataset, while standard for IIoT
anomaly detection, may not capture the data heterogeneity of multi-vendor
factory floors; domain adaptation studies are left to future work. Third,
the formal security reduction of Theorem~\ref{thm:pq_conf} relies on the
GAIP hardness assumption, for which no polynomial quantum algorithm is
currently known but for which no formal hardness proof exists; ongoing
cryptanalytic work should be monitored.

% ================================================================
%  SECTION IX — CONCLUSION
% ================================================================
\section{Conclusion}
\label{sec:conclusion}

We presented Q-FE, a Quantum-Native 6G Far-Edge architecture addressing the
tripartite challenge of ultra-low latency, data privacy, and quantum-safe
security for IIoT. Three co-designed contributions define Q-FE:
(i)~Micro-Digital Twins shifting DT computation to the Far-Edge for
sub-millisecond state synchronisation, with formally bounded $O(d)$
per-beacon update cost and sub-2\,ms handover migration;
(ii)~a \textsc{csidh}-MAC cross-layer integration exploiting 64-byte public
keys to embed quantum-safe key exchange in 6G NR MAC CEs without
fragmentation or URLLC violation, with $O(1)$ per-rotation signalling
overhead; and
(iii)~an Asynchronous FL protocol on DAG-DLT smart contracts providing
trustless, straggler-free, poisoning-resistant, and Sybil-resistant
model training at $O(N \log R)$ aggregation complexity.
Simulations confirm 62\% MAC overhead reduction versus Kyber-1024,
$P_{99.9}$ latency of 0.78\,ms, 31\% faster convergence than synchronous
FL, only 13\,$\mu$W additional energy per device, and $\muDT$ migration
within 1.9\,ms across $10^4$ handover events. A three-phase deployment
roadmap---covering software prototyping (2026--2027), 3GPP MAC CE
standardisation (2028--2029), and native silicon integration (2030+)---
provides a concrete pathway aligned with the NIST post-quantum migration
timeline. Future work will investigate constant-time \textsc{csidh}
variants~\cite{castryck2021ctidh} for side-channel mitigation, hierarchical
$\muDT$ federation across gNBs, and formal verification of the
smart-contract logic via TLA+.

\section*{Data and Code Availability}
The datasets and custom simulation environment (NS-3 + PySyft) will be
made available on GitHub/Zenodo upon acceptance.

\section*{Declaration of Generative AI and AI-Assisted Technologies}
During the preparation of this manuscript, the author utilised Claude
strictly for proofreading and improving English language readability. The
author thoroughly reviewed and edited all content and assumes full
responsibility for the final publication.

% ================================================================
%  BIBLIOGRAPHY
% ================================================================
\bibliographystyle{IEEEtran}
\bibliography{references}

@article{saad2020vision,
  author  = {W. Saad and M. Bennis and M. Chen},
  title   = {A Vision of the {6G} Wireless Systems: Applications, Enabling
             Technologies, and Design Aspects},
  journal = {IEEE Netw.},
  volume  = {34},
  number  = {3},
  pages   = {134--142},
  year    = {2020}
}

@article{khan2022digital,
  author  = {L. U. Khan and W. Saad and Z. Han and others},
  title   = {Digital-Twin-Enabled {6G}: Vision, Architectural Trends, and
             Future Directions},
  journal = {IEEE Commun. Mag.},
  volume  = {60},
  number  = {1},
  pages   = {74--80},
  year    = {2022}
}

@article{mosca2018cybersecurity,
  author  = {M. Mosca},
  title   = {Cybersecurity in an Era with Quantum Computers: Will We Be Ready?},
  journal = {IEEE Secur. Priv.},
  volume  = {16},
  number  = {5},
  pages   = {38--41},
  year    = {2018}
}

@inproceedings{sikeridis2021post,
  author    = {D. Sikeridis and P. Kampanakis and M. Devetsikiotis},
  title     = {Post-Quantum Authentication in {TLS} 1.3: A Performance Study},
  booktitle = {Proc. ISOC NDSS},
  year      = {2020}
}

@inproceedings{castryck2018csidh,
  author    = {W. Castryck and T. Lange and C. Martindale and L. Panny and J. Renes},
  title     = {{CSIDH}: An Efficient Post-Quantum Commutative Group Action},
  booktitle = {Proc. ASIACRYPT},
  series    = {Lecture Notes Comput. Sci.},
  volume    = {11274},
  pages     = {395--427},
  publisher = {Springer},
  year      = {2018}
}

@inproceedings{mcmahan2017communication,
  author    = {B. McMahan and E. Moore and D. Ramage and S. Hampson and
               B. A. y Arcas},
  title     = {Communication-Efficient Learning of Deep Networks from
               Decentralized Data},
  booktitle = {Proc. AISTATS},
  pages     = {1273--1282},
  year      = {2017}
}

@inproceedings{castryck2022efficient,
  author    = {W. Castryck and T. Decru},
  title     = {An Efficient Key Recovery Attack on {SIDH}},
  booktitle = {Proc. EUROCRYPT},
  year      = {2023}
}

@inproceedings{lian2018asynchronous,
  author    = {X. Lian and C. Zhang and H. Zhang and others},
  title     = {Asynchronous Decentralized Parallel Stochastic Gradient Descent},
  booktitle = {Proc. ICML},
  year      = {2018}
}

@article{castryck2021ctidh,
  author  = {D. Cervantes-V{\'a}zquez and M. Chenu and J.-J. Chi-Dom{\'i}nguez and
             L. De Feo and F. Rodr{\'i}guez-Henr{\'i}quez and B. Smith},
  title   = {{CTIDH}: Faster Constant-Time {CSIDH}},
  journal = {IACR Trans. Cryptogr. Hardw. Embed. Syst.},
  volume  = {2021},
  number  = {4},
  pages   = {351--387},
  year    = {2021}
}

@techreport{nist2024fips203,
  author      = {{National Institute of Standards and Technology}},
  title       = {Module-Lattice-Based Key-Encapsulation Mechanism Standard},
  number      = {{FIPS} 203},
  institution = {NIST},
  year        = {2024},
  doi         = {10.6028/NIST.FIPS.203}
}

@inproceedings{blanchard2017machine,
  author    = {P. Blanchard and E. M. El Mhamdi and R. Guerraoui and
               J. Stainer},
  title     = {Machine Learning with Adversaries: {Byzantine} Tolerant
               Gradient Descent},
  booktitle = {Proc. NeurIPS},
  pages     = {119--129},
  year      = {2017}
}

@inproceedings{geyer2017differentially,
  author    = {R. C. Geyer and T. Klein and M. Nabi},
  title     = {Differentially Private Federated Learning: A Client Level
               Perspective},
  booktitle = {Workshop on Privacy-Preserving Machine Learning (NeurIPS)},
  year      = {2017}
}

@article{li2020federated,
  author  = {T. Li and A. K. Sahu and A. Talwalkar and V. Smith},
  title   = {Federated Learning: Challenges, Methods, and Future Directions},
  journal = {IEEE Signal Process. Mag.},
  volume  = {37},
  number  = {3},
  pages   = {50--60},
  year    = {2020}
}

@techreport{popov2018tangle,
  author      = {S. Popov},
  title       = {The Tangle},
  institution = {IOTA Foundation White Paper},
  year        = {2018},
  note        = {Version~1.4.3}
}

@inproceedings{baiardi2024notline,
  author    = {F. Baiardi and V. Sammartino and S. Ruggieri},
  title     = {{NotLine}: A Non-Intrusive Automated Platform to Build a
               Security Digital Twin},
  booktitle = {Proc. 28th IEEE/ACM Int. Symp. Distrib. Simul. Real-Time
               Appl. (DS-RT)},
  pages     = {1--8},
  year      = {2025}
}

% ================================================================
%  BIOGRAPHY
% ================================================================
\begin{IEEEbiography}[{\includegraphics[width=1in,height=1.5in,clip,
  keepaspectratio]{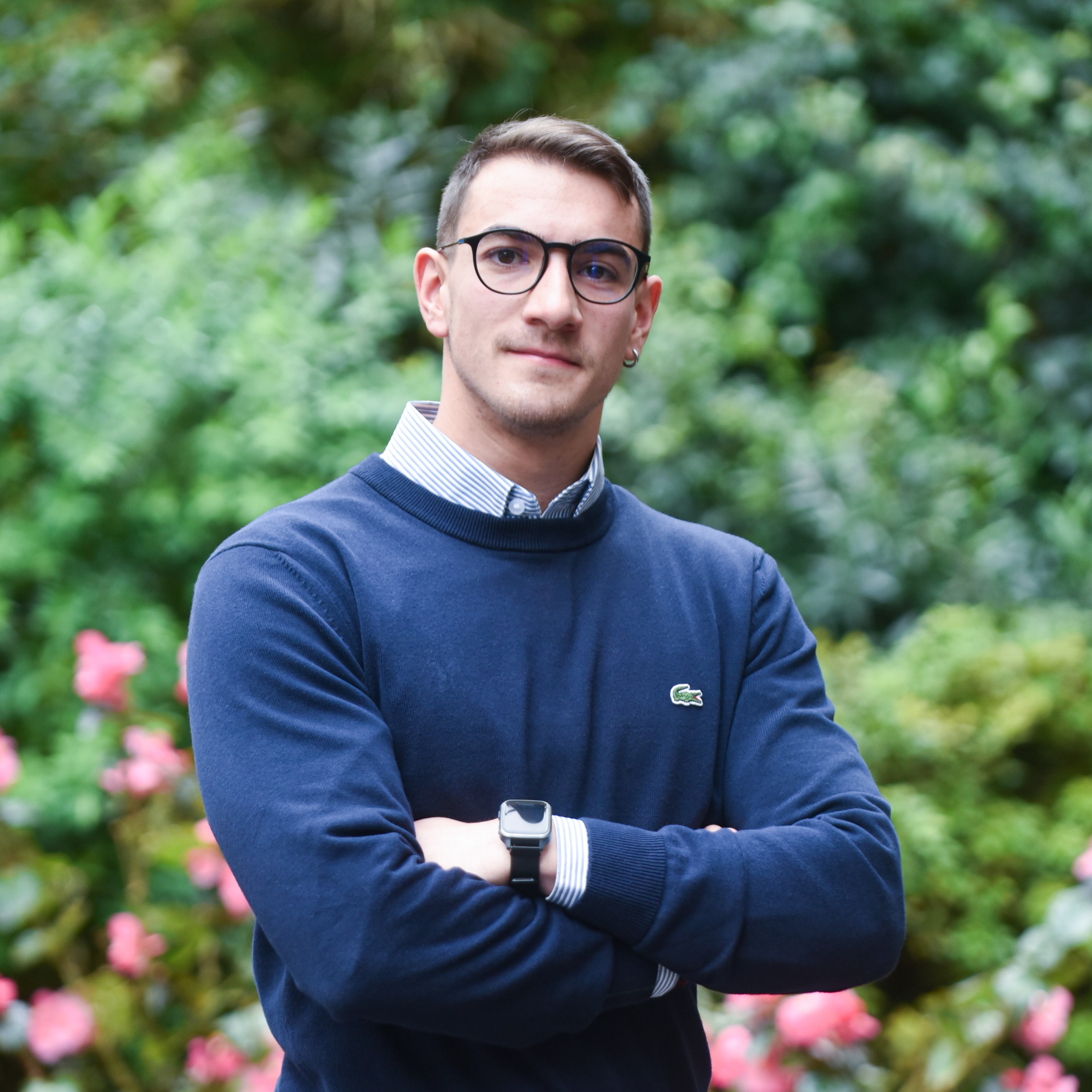}}]{Vincenzo Sammartino}
is pursuing the National Ph.D.\ in Artificial Intelligence at the
Universit\`a di Pisa, Italy, and is a Visiting Ph.D.\ Student at KAUST,
Saudi Arabia, contributing to the ResilientGuard project on decentralised
TinyML for UAV swarm security. His research interests include cybersecurity
for cyber-physical systems, security digital twins, post-quantum
cryptography, and privacy-preserving federated learning.
\end{IEEEbiography}

\end{document}